\documentclass[%
reprint,
% onecolumn,
superscriptaddress,
%groupedaddress,
%unsortedaddress,
%runinaddress,
%frontmatterverbose, 
% preprint,
%preprintnumbers,
nofootinbib,
%nobibnotes,
%bibnotes,
amsmath,amssymb,
aps,
%pra,
%prb,
%rmp,
%prstab,
%prstper,
%floatfix,
]{revtex4-2}

\input{pretex}
\usepackage[ruled]{algorithm2e}
\usepackage{pgfplots}
\pgfplotsset{width=8.6cm, compat=1.17} % Set figure size, use compatible version
\usepgfplotslibrary{statistics}
\usepgfplotslibrary{colorbrewer} % For better color schemes
\usepgflibrary{plotmarks} % LATEX and plain TEX and pure pgf
\usepgflibrary[plotmarks] % ConTEXt and pure pgf
\usetikzlibrary{plotmarks} % LATEX and plain TEX when using TikZ
\usetikzlibrary[plotmarks] % ConTEXt when using TikZ
\usepackage{filecontents} % To embed the CSV data directly
\usepackage{calc} % Enable coordinate calculation
\urlstyle{same} % Improve the appearance of url in the bibliography
\expandafter\def\expandafter\UrlBreaks\expandafter{\UrlBreaks%  save the current one
  \do\a\do\b\do\c\do\d\do\e\do\f\do\g\do\h\do\i\do\j%
  \do\k\do\l\do\m\do\n\do\o\do\p\do\q\do\r\do\s\do\t%
  \do\u\do\v\do\w\do\x\do\y\do\z\do\A\do\B\do\C\do\D%
  \do\E\do\F\do\G\do\H\do\I\do\J\do\K\do\L\do\M\do\N%
  \do\O\do\P\do\Q\do\R\do\S\do\T\do\U\do\V\do\W\do\X%
  \do\Y\do\Z\do\0\do\1\do\2\do\3\do\4\do\5\do\6\do\7\do\8\do\9} % Allow url to breakline after every character

\begin{document}

\title{Symmetry-Based Quantum Circuit Mapping}

\author{Di Yu}
\thanks{D. Y. and K. F. contributed equally to this work.}
\affiliation{Institute for Quantum Computing, Baidu Research, Beijing 100193, China}
\author{Kun Fang}
\email{fangkun02@baidu.com}
\affiliation{Institute for Quantum Computing, Baidu Research, Beijing 100193, China}

\date{\today}

\begin{abstract}
Quantum circuit mapping is a crucial process in the quantum circuit compilation pipeline, facilitating the transformation of a logical quantum circuit into a list of instructions directly executable on a target quantum system.  Recent research has introduced a post-compilation step known as remapping, which seeks to reconfigure the initial circuit mapping to mitigate quantum circuit errors arising from system variability. As quantum processors continue to scale in size, the efficiency of quantum circuit mapping and the overall compilation process has become of paramount importance. In this work, we introduce a quantum circuit remapping algorithm that leverages the intrinsic symmetries in quantum processors, making it well-suited for large-scale quantum systems. This algorithm identifies all topologically equivalent circuit mappings by constraining the search space using symmetries and accelerates the scoring of each mapping using vector computation. Notably, this symmetry-based circuit remapping algorithm exhibits linear scaling with the number of qubits in the target quantum hardware and is proven to be optimal in terms of its time complexity. Moreover, we conduct a comparative analysis against existing methods in the literature, demonstrating the superior performance of our symmetry-based method on state-of-the-art quantum hardware architectures and highlighting the practical utility of our algorithm, particularly for quantum processors with millions of qubits.
\end{abstract}
\maketitle

\begin{figure*}[!htbp]
\includegraphics[width=\textwidth]{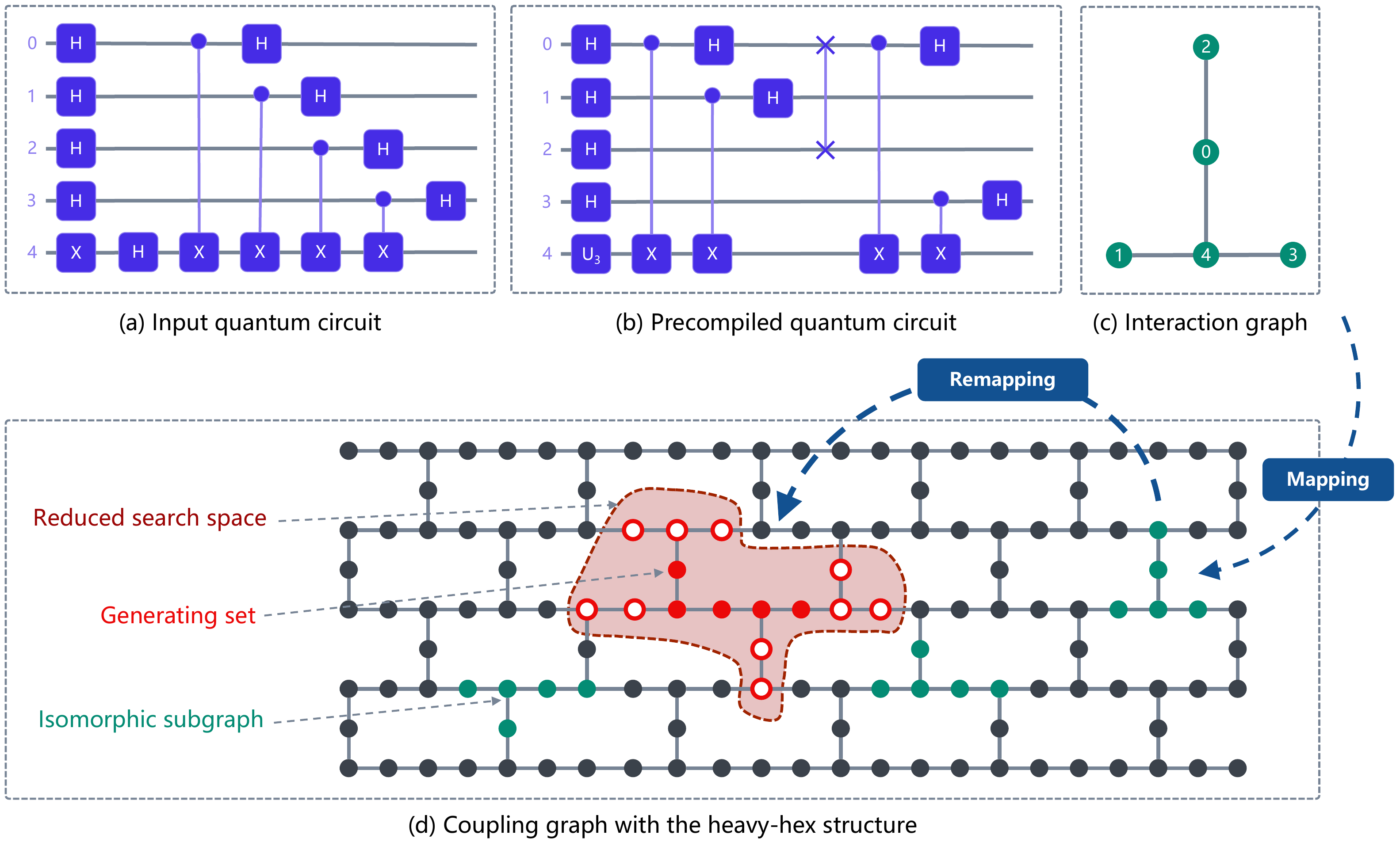}
\caption{\footnotesize An illustration of quantum circuit compilation with mapping and remapping processes. (a) The input quantum circuit. (b) The precompiled quantum circuit, assuming \{U3, CX, SWAP\} as the set of native operations in this example. (c) The interaction graph of the precompiled circuit. (d) A coupling graph with the heavy-hex structure. The mapping process finds a mapping from logical qubits in the interaction graph to physical qubits in the coupling graph. The remapping process finds a mapping from physical qubits to physical qubits such that the composed mapping-and-remapping process reduces the error of circuit implementation. The search for all isomorphic subgraphs employed in the remapping process can be confined to the reduced search space, which is a neighborhood of the generating set, as depicted within the red area.}
\label{fig:circuit_mapping_idea}   
\end{figure*}

\section{Introduction}
\label{section: introduction}

Quantum computing has emerged as a promising avenue for solving intractable problems that are beyond the reach of classical computing, such as integer factoring~\cite{shor1994algorithms}, large database search~\cite{grover1996fast}, chemistry simulation~\cite{lanyon2010towards} and machine learning~\cite{biamonte2017quantum}. To put the theoretically blueprinted quantum advantages into use, it is essential to execute them on real-world quantum computers, moving beyond theoretical concepts and simulation environments. However, quantum algorithms are typically designed at the logical level, assuming ideal qubits and the ability to apply a universal set of quantum gates. These algorithms are not directly executable on quantum hardware devices, which comprise qubits with limited coherence time and support only a specific set of native operations. This disparity underscores the necessity of quantum circuit compilation, which translates high-level quantum algorithms into low-level quantum instructions compatible with the target systems.

Quantum circuit compilation involves various tasks, including gate decomposition into the native operations of the quantum device, adaptation of operations to the hardware's topology, and optimization to reduce circuit depth, among others. Of particular importance is quantum circuit mapping, which aligns the quantum circuit with the hardware architecture, ensuring that any two-qubit operation is applied to physically connected qubits on the device. This alignment is often achieved through the insertion of SWAP operations, allowing to interchange the position of two logical qubits on the architecture. Chaining such operations permits arbitrary routing among remote qubits. To guarantee the reliable execution of the resulting circuit, it is imperative to minimize the overhead of SWAP operations. Previous research on the circuit mapping problem has predominantly focused on gate-optimal solutions, with the aim of minimizing the number of inserted SWAP gates~\cite{li2019tackling, siraichi2019qubit, siraichi2018qubit, wille2019mapping, zulehner2017exact, zulehner2018efficient}. Others have concentrated on time-optimal circuit mapping, which seeks to minimize the depth of the entire transformed circuit~\cite{tan2020optimal, lao2021timing, zhang2021time}.

The introduction of additional SWAP gates in quantum circuits aligns them with a target topology but simultaneously introduces more errors during their implementation. To address this, the authors in~\cite{nation2023suppressing} introduced quantum circuit remapping as a post-compilation step, aiming to reconfigure the initial mapping obtained from the standard techniques. This process was particularly realized by the MAPOMATIC algorithm that identifies topologically equivalent mappings but with enhanced fidelity on the target quantum system. It has also been integrated with existing circuit mapping techniques and has proven to be efficient in mitigating quantum circuit errors arising from system variability. However, in the remapping process, it requires to solve a subgraph matching problem, which is generally known to be NP-complete~\cite{cook71}. In the MAPOMATIC algorithm, this challenge was addressed by employing standard subgraph matching routines, such as VF2 and VF2++, to search for isomorphic graphs~\cite{cordella2004sub, juttner2018vf2++}. However, the time complexity of these procedures scales at least polynomially with the total number of qubits on the target quantum system. This polynomial scaling may present scalability challenges, especially in the context of the substantial advancements in quantum hardware that have been observed~\cite{arute2019quantum, wu2021strong}. The latest quantum computers already contain hundreds of qubits~\cite{collins2022ibm}, and this number is expected to increase by a thousandfold in the coming years~\cite{roadmap2022ibm}, eventually reaching millions of qubits for fully-fledged quantum computers. Given these visions and developments, current remapping algorithms may face challenges, even when compiling small circuits onto near-term quantum processors.

In this work, we address the scalability challenge of quantum circuit remapping by leveraging the inherent symmetries present in scalable quantum architectures. The main idea of this work is illustrated in Figure~\ref{fig:circuit_mapping_idea}. In the precompilation phase, we determine an initial circuit mapping that aligns the interaction graph of the compiled circuit with the coupling graph of the target quantum processor. Subsequently, during the remapping process, we identify all isomorphic subgraphs of the initial mapping and score the induced mappings using hardware calibration data. In particular, we introduce an algorithm that significantly reduces the search space for all topologically equivalent circuit mappings by leveraging hardware symmetries and efficiently identifies the optimal circuit mapping through a vectorized scoring method. Notably, this symmetry-based circuit remapping algorithm exhibits linear scaling with the number of qubits in the target hardware and is proved to be optimal in terms of its time complexity. Moreover, we benchmark the runtime of our algorithm with three typical quantum processors of grid, octagonal, heavy-hex structures, respectively, demonstrating the superior efficiency of our algorithm over the existing method in practical scenarios. To the best of our knowledge, this study marks the first instance of utilizing the topological symmetries of quantum processors to address quantum circuit compilation challenges, potentially offering insights into other compilation processes. Moreover, our technical contribution is a purely mathematical result and holds promise for application in various other problem domains.

\vspace{-0.2cm}
The remainder of this paper is structured as follows: Section~\ref{section: preliminaries} introduces the preliminaries to be used throughout this work. Section~\ref{section: symmetry-based quantum circuit mapping} presents the symmetry-based subgraph matching algorithm and its underlying theorem, justifying the search for all topologically equivalent circuit mappings within a reduced search space. It also outlines the vectorized scoring method that expedites the mapping process in practice. Section~\ref{section: performance benchmark} offers a comparative analysis of our algorithm against existing methods. Finally, Section~\ref{section: concluding remarks} concludes the paper and proposes potential directions for future research.

\section{Preliminaries}
\label{section: preliminaries}
\subsection{Notation}

\vspace{-0.4cm}
A {graph} is defined as an ordered pair of the vertex set and the edge set, denoted as $G = (V, E)$, where $V$ (or $V(G)$) denotes the set of vertices and $E$ (or $E(G)$) represents the set of edges. Each edge is an unordered pair of vertices $e = \{u, v\}$ for some $u, v \in V$. The order of a graph is its number of vertices $|V|$. A graph $G'$ is a subgraph of $G$, denoted as $G' \subseteq G$, if $V(G') \subseteq V(G)$ and $E(G') \subseteq E(G)$. An induced subgraph of a graph is another graph, formed from a subset of the vertices of the graph and all of the edges (from the original graph) connecting pairs of vertices in that subset. The degree of a vertex of a graph is the number of edges that are connected to the vertex. A graph has a bounded degree if the maximum degree of any vertex in the graph is limited by a constant value that remains independent of the graph's order. A {walk} is a sequence of edges $(e_{1}, e_{2}, \cdots, e_{n})$ for which there is a sequence of vertices $(v_{1}, v_{2}, \cdots, v_{n+1})$ such that $e_{k} = \{v_{k}, v_{k+1}\}$ for $k \in \{1, 2, \cdots, n\}$. A {path} is a walk in which all edges and all vertices are distinct, and the length of the path is defined as the number of edges it contains. Two vertices $u$ and $v$ are {connected} if there exists a path from vertex $u$ to vertex $v$ in graph $G$. Two vertices are adjacent if they are connected by a path of length $1$, i.e. by a single edge. A graph is connected if every pair of vertices in the graph is connected. The $k$-th order neighborhood of a vertex $v$ in a graph $G$, denoted as $N_{G}^{k}(v)$, refers to the set of vertices that can be reached from $v$ within $k$ hops. In other words, $u \in N_{G}^{k}(v)$ if and only if $u = v$ or there exists a path connecting $u$ and $v$ with a length no greater than $k$. The $k$-th order neighborhood of a subset $S$, denoted as $N_G^k(S)$, is defined as the union of the $k$-th order neighborhoods of all vertices within the subset $S$.

Given two graphs $G_{1} = (V_{1}, E_{1})$ and $G_{2} = (V_{2}, E_{2})$. These graphs are isomorphic, denoted as $G_{1} \cong G_{2}$, if there exists a bijection between the vertex sets $f: V_{1} \rightarrow V_{2}$ such that $\{u, v\} \in E_{1}$ if and only if $\{f(u), f(v)\} \in E_{2}$. This bijection $f$ is referred to as an {isomorphism} from $G_{1}$ to $G_{2}$. In particular, an {automorphism} is an isomorphism from a graph to itself. Note that $f$ is a mapping from vertices to vertices, but we can naturally extend it to a mapping that operates on both vertices and edges: $\tilde{f}: V_{1} \cup E_{1} \rightarrow V_{2} \cup E_{2}$, defined by $\tilde{f}(v) = f(v)$ and $\tilde{f}(\{u, v\}) = \{f(u), f(v)\}$. Therefore, $\tilde{f}$ can be taken as a mapping from graph to graph and is referred to as a natural extension of $f$.  
In this work, we will address the subgraph matching problem, which involves finding all subgraphs in a target graph that are isomorphic to a given pattern graph. This problem is NP-complete in general~\cite{cook71,qiao2017subgraph}. 

\subsection{Symmetry of graph}

In general, the symmetry of an object is associated with a transformation that preserves certain structural aspects of it. In graph theory, the automorphisms of a graph play the role of symmetry transformations. We denote the set of all automorphisms of a graph $G$ as $\mathrm{Sym}(G)$. This set forms a group under the composition of operators, which we refer to as the symmetry group of $G$.  Let $H$ be a subgroup of $\mathrm{Sym}(G)$. For a fixed vertex $x$ in $G$, the orbit of $x$ under $H$ is defined as $\mathrm{orb}_{H}(x) = \{y \in V(G): y = gx\ \mathrm{for\ some}\ g \in H\}$. In particular, we refer to the orbit of $x$ with respect to a symmetry group $H = \{f^{i}: i \in \mathbb{Z}\}$, as the orbit of $x$ with respect to automorphism $f$, denoted as $\mathrm{orb}_{f}(x)$. It is easy to show that all orbits within $G$ form a partition of the vertices, and two vertices $u$ and $v$ in $V(G)$ belong to the same orbit if and only if there exists an automorphism $f$ of graph $G$ such that $f(u) = v$. The orbit of a set of vertices $V_{0} \subseteq V(G)$, denoted as $\mathrm{orb}_{f}(V_{0})$, is defined by the union of the orbits $\mathrm{orb}_{f}(x)$ for all $x \in V_{0}$. A generating set of a graph $G = (V, E)$ with respect to the automorphism $f$ is a subset of vertices $V_{0} \subseteq V$ such that the orbit of $V_{0}$ with respect to $f$ covers all the vertices in $G$, i.e., $\mathrm{orb}_{f}(V_{0}) = V$. Similarly, the generating set of $G$ under a group of automorphisms $F$ is a subset of vertices $V_{0}' \subseteq V$ such that $\mathrm{orb}_{F}(V_{0}') = V$. 

\subsection{Graph distance}

The {distance} between two vertices $v$ and $u$ within a graph, denoted as $d(v, u)$, corresponds to the shortest path connecting them, measured in terms of the number of edges. In cases where no such path exists, the distance is conventionally regarded as infinite. The {eccentricity} $\epsilon(v)$ for a given vertex $v$ within a graph $G = (V, E)$ is defined as the greatest distance between vertex $v$ and any other vertex, that is, $\epsilon(v) = \max_{u \in V}d(v, u)$. This value tells how distant a node is from the farthest node in the entire graph. The {radius} of a graph $G = (V, E)$, denoted as $r(G)$, represents the minimum eccentricity among all vertices, that is, $r(G) = \min_{v \in V}\epsilon(v)$. Then the minimizer is called a {central vertex} of the graph. 

\vspace{-0.4cm}
\subsection{Quantum circuit mapping}

The interaction graph of a quantum circuit serves as a representation of the qubits and the required interactions between them. Each node in the interaction graph corresponds to a logical qubit in the quantum circuit, and an edge connects two qubits if there is a two-qubit gate in the circuit acting on both of those qubits. However, existing quantum computers typically provide a native gate set that includes a family of single-qubit gates along with some two-qubit gates. The interactions between qubits in a quantum processor are also constrained by the connectivity of its architecture, which can be expressed through the coupling graph, denoted as $G = (V, E)$. In this graph, $V$ represents the set of physical qubits, and an edge $\{u, v\} \in E$ signifies that a two-qubit gate can be directly executed between physical qubits $u$ and $v$.

To execute a quantum circuit on a target quantum device, the set of logical qubits, denoted as $Q_l$, in the interaction graph of the circuit must first be mapped to the physical qubits, represented as $Q_p$, according to the coupling graph of the quantum processor. This mapping involves establishing an injective function $g: Q_l \to Q_p$, meaning that each logical qubit is uniquely assigned to a specific physical qubit. The purpose of circuit mapping is to ensure that the two-qubit gates in the circuit can be executed using the available physical qubits in accordance with the constraints defined by the coupling graph of the quantum device. In graph theory, this is the same as finding a subgraph of the coupling graph that is isomorphic to the interaction graph of the circuit.

Quantum circuit remapping is introduced as a subsequent step that follows circuit mapping and entails the mapping of $h: Q_p \to Q_p$. Therefore, the combined effect of the circuit mapping and remapping, denoted as $h \circ g$, is a mapping from logical qubits to physical qubits. This mapping aims to reduce errors during the implementation of the quantum circuit. The most crucial step in the remapping process is to identify all subgraphs in the coupling graph that are isomorphic to the interaction graph of the circuit, essentially solving a subgraph matching problem. 

% \vspace{-0.2cm}
\section{Symmetry-Based Quantum Circuit Mapping}
\label{section: symmetry-based quantum circuit mapping} 

Symmetries play a pivotal role in various applications within the field of quantum information science, including designing quantum machine learning algorithms~\cite{lyu2023symmetry,meyer2023exploiting}, evaluating quantum channel capacities~\cite{wang2018semidefinite,fang2021geometric,fang2019quantum}, analyzing quantum entropies~\cite{fang2021ultimate,Fang2020} and quantifying all kinds of quantum resources for quantum computation and quantum communication~\cite{hayashi2021finite,Fang2022,Fang2020nogo,regula2019one,fang2019non,fang2021sum,Diaz2018usingreusing,fang2018,regula2018,xie2017}. Nevertheless, the potential of harnessing hardware symmetries for quantum circuit compilation remains largely unexplored.

In this section, we introduce a subgraph matching algorithm that leverages the topological symmetry of quantum processors. We also present a circuit mapping scoring algorithm that utilizes vector computation. These routines are then integrated into a compilation pipeline, forming a complete quantum circuit mapping algorithm. We then provide a complexity analysis of our algorithm, highlighting its optimality with respect to time complexity.

\subsection{Symmetry-based subgraph matching}

The idea behind our subgraph matching algorithm, as depicted in Figure~\ref{fig:circuit_mapping_idea}, is to narrow down the search space for isomorphic graphs by focusing on a neighborhood of the generating set within the target graph. This approach is supported by the following Theorem~\ref{theorem: core}, which demonstrates that all isomorphic graphs can be obtained by applying the corresponding symmetry transformations to graphs within the reduced space.

\begin{theorem}
Let $T$ be a graph and $S$ be a generating set of $T$ with respect to the automorphism $f$. Let $G$ be a subgraph of $T$ with radius $r$. Then for any subgraph $G''$ of $T$ that is isomorphic to $G$, there exists an integer $n$ and subgraph $G'\subseteq T$ such that $G'' = \tilde f^n(G')$ with $G' \cong G$, $V(G') \subseteq N^{r}_{T}(S)$ and $\tilde f$ being the natural extension of $f$. \label{theorem: core}
\end{theorem}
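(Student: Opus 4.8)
The plan is to construct $G'$ by \emph{translating} $G''$ along powers of $f$ so that a distinguished vertex of $G''$ is carried into the generating set $S$; the radius hypothesis will then guarantee that every remaining vertex of the translate lies within $N^{r}_{T}(S)$.

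Concretely, I would first fix an isomorphism $\phi\colon G\to G''$ and pick a central vertex $c$ of $G$, which exists because $r=r(G)$ is attained by some eccentricity, and which satisfies $d_{G}(c,u)\le r$ for all $u\in V(G)$. Put $c'':=\phi(c)\in V(G'')\subseteq V(T)$. Since $S$ generates $T$ under $f$, i.e.\ $\mathrm{orb}_{f}(S)=V(T)$, there is a vertex $s\in S$ and an integer $n$ with $c''=f^{n}(s)$, equivalently $f^{-n}(c'')=s$.

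Next I would set $G':=\tilde f^{-n}(G'')$, so that $\tilde f^{\,n}(G')=G''$ holds by construction. As $f$ is an automorphism of $T$, the natural extension $\tilde f$ maps $T$ onto $T$ bijectively and edge-faithfully, hence $\tilde f^{-n}$ carries the subgraph $G''\subseteq T$ to a subgraph $G'\subseteq T$ with $G'\cong G''\cong G$. There remains only the containment $V(G')\subseteq N^{r}_{T}(S)$. By choice of $n$, the vertex $c''$ is sent to $s\in S$. For an arbitrary $u''\in V(G'')$, write $u''=\phi(u)$ with $u\in V(G)$; a path from $c$ to $u$ in $G$ of length at most $r$ is mapped by $\phi$ to a path of the same length from $c''$ to $u''$ whose edges all lie in $G''\subseteq T$, so $d_{T}(c'',u'')\le r$. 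Because automorphisms of $T$ preserve $d_{T}$, we get $d_{T}\big(f^{-n}(c''),f^{-n}(u'')\big)=d_{T}(c'',u'')\le r$, that is, $f^{-n}(u'')\in N^{r}_{T}(s)\subseteq N^{r}_{T}(S)$. Letting $u''$ range over $V(G'')$ yields $V(G')\subseteq N^{r}_{T}(S)$, which finishes the proof.

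Since this is a direct construction, there is no genuinely hard step; the point requiring care is the bookkeeping of distances and lengths: passing from the subgraph $G''$ to the ambient $T$ can only decrease distances, a graph isomorphism preserves path lengths \emph{within} $G$ (and hence within $G''$), but only an \emph{automorphism} of $T$ preserves distances in $T$ itself. Orienting each of these comparisons correctly, together with the observation that the central vertex is the right object to anchor into $S$, is essentially the whole argument.
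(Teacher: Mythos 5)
Your proposal is correct and follows essentially the same route as the paper: anchor a central vertex of the isomorphic copy into $S$ via the orbit condition, pull the whole subgraph back by $\tilde f^{-n}$, and use the fact that automorphisms of $T$ preserve distances (the paper packages this last step as a separate lemma stating $N^{m}_{T}(f^{n}(S)) = f^{n}(N^{m}_{T}(S))$, but it is the same argument). Your explicit remark that distances can only decrease when passing from the subgraph to the ambient graph $T$ is the right point of care and is implicitly used in the paper as well.
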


The proof of this result requires the following lemmas. 

\begin{lemma}\label{theorem: connected_path}
Let $T$ be a graph with an automorphism $f$. Let $u, v$ be two vertices in $T$. Then there exists a path connecting $u$ and $v$ of length $l$ if and only if $f^{n}(u)$ and $f^{n}(v)$ are connected by a path of length $l$ for any integer $n$. 
\end{lemma}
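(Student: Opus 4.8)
The plan is to use the single structural fact that an automorphism is an edge-preserving bijection, hence carries paths to paths of the same length, combined with the observation that every power $f^n$ of an automorphism is again an automorphism --- including for negative $n$, since $f$ is invertible and $f^{-1} \in \mathrm{Sym}(T)$ as well. This makes the stated biconditional essentially symmetric: one direction is obtained by pushing a path forward along $f^n$, the other by pushing a path forward along $f^{-n}$.

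Concretely, I would first record that $f \in \mathrm{Sym}(T)$ implies $f^n \in \mathrm{Sym}(T)$ for all $n \in \mathbb{Z}$, which follows by induction from the group structure of $\mathrm{Sym}(T)$ noted in the preliminaries, with base cases $f^0 = \mathrm{id}$ and $f^{-1}$ the inverse automorphism. For the forward implication, suppose $(v_0, v_1, \dots, v_l)$ is a path in $T$ with $v_0 = u$, $v_l = v$, all $v_k$ pairwise distinct, and $\{v_k, v_{k+1}\} \in E(T)$ for each $k \in \{1, \dots, l\}$. Applying $f^n$ vertexwise produces the sequence $(f^n(v_0), \dots, f^n(v_l))$; since $f^n$ preserves edges, each $\{f^n(v_k), f^n(v_{k+1})\}$ lies in $E(T)$, and since $f^n$ is injective the vertices $f^n(v_k)$ remain pairwise distinct. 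Hence this sequence is again a path, of the same length $l$, connecting $f^n(u)$ and $f^n(v)$. The reverse implication is then immediate by applying this argument to the automorphism $f^{-n}$: a path of length $l$ between $f^n(u)$ and $f^n(v)$ has image under $f^{-n}$ a path of length $l$ between $f^{-n}(f^n(u)) = u$ and $f^{-n}(f^n(v)) = v$.

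I do not expect a genuine obstacle here. The only points deserving a moment's care are that negative powers of $f$ are bona fide automorphisms (so that the two directions are truly symmetric), and that it is precisely the injectivity of $f^n$ that upgrades the image \emph{walk} into an honest \emph{path} with distinct vertices and edges, rather than a mere walk; both facts are immediate from the definitions in Section~\ref{section: preliminaries}.
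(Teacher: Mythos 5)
Your proof is correct and follows essentially the same route as the paper's: push the path forward along the automorphism (the paper iterates $f$ and then invokes $f^{-1}$ to cover negative $n$, while you apply $f^{n}$ and $f^{-n}$ directly, which is an immaterial difference). If anything, you are slightly more careful than the paper in noting that injectivity of $f^{n}$ is what keeps the image vertices distinct, so that the image is a genuine path rather than a walk.
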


\begin{proof}
Suppose that $(w_0, w_{1}, \cdots, w_l)$ is a path connecting vertices $w_0 = u$ and $w_l = v$. Then $\{w_k, w_{k+1}\}$ is an edge in graph $T$ for any $k \in \{0, \cdots, l-1\}$. Since $f$ is an automorphism, we know that $\{f(w_k), f(w_{k+1})\}$ is also an edge in $T$. This implies that $(f(w_0), f(w_{1}), \cdots, f(w_l))$ is a path in $T$, which connects $f(u)$ and $f(v)$ and has a length of $l$. By repeating this process, we can demonstrate that for all positive integers $n$, $f^n(u)$ and $f^n(v)$ are connected by a path of length $l$. Furthermore, since $f^{-1}$ is also an automorphism of $T$, we can extend the range of $n$ to all integers and conclude the proof.
\end{proof}

The next lemma demonstrates that the operations that take the neighborhood of a set and map a set under automorphisms can be exchanged.
  
\begin{lemma} 
Let $T$ be a graph with an automorphism $f$, and let $S \subseteq V(T)$ be a subset of vertices. Then for any integer $n$ and positive integer $m$, $N^{m}_{T}(f^{n}(S)) = f^{n}(N^{m}_{T}(S))$. \label{theorem: commutability}
\end{lemma}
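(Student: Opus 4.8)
The plan is to reduce the set identity to a vertex-by-vertex statement and then invoke Lemma~\ref{theorem: connected_path}. First I would unfold the definition of the neighborhood of a set as a union: $N^{m}_{T}(S) = \bigcup_{s \in S} N^{m}_{T}(s)$, and similarly $N^{m}_{T}(f^{n}(S)) = \bigcup_{s \in S} N^{m}_{T}(f^{n}(s))$. Since applying $f^{n}$ (a bijection on $V(T)$) commutes with taking unions, $f^{n}(N^{m}_{T}(S)) = \bigcup_{s \in S} f^{n}\bigl(N^{m}_{T}(s)\bigr)$. Hence it suffices to prove the single-vertex identity $N^{m}_{T}(f^{n}(s)) = f^{n}(N^{m}_{T}(s))$ for every $s \in V(T)$, every integer $n$, and every positive integer $m$.

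For the single-vertex case I would argue by a chain of equivalences. Fix $s$ and let $w \in V(T)$ be arbitrary. By the definition of the $m$-th order neighborhood, $w \in N^{m}_{T}(f^{n}(s))$ if and only if either $w = f^{n}(s)$ or there is a path connecting $f^{n}(s)$ and $w$ of length at most $m$. Writing $w = f^{n}(w')$ with $w' = f^{-n}(w)$ (possible since $f^{n}$ is a bijection), the first alternative becomes $w' = s$, and for the second alternative I apply Lemma~\ref{theorem: connected_path} with the automorphism $f$ and the integer $-n$: there is a path of length $l$ connecting $f^{n}(s)$ and $f^{n}(w')$ if and only if there is a path of length $l$ connecting $s$ and $w'$. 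Combining the two alternatives, $w \in N^{m}_{T}(f^{n}(s))$ if and only if $w' = f^{-n}(w) \in N^{m}_{T}(s)$, i.e.\ if and only if $w \in f^{n}(N^{m}_{T}(s))$. This establishes the single-vertex identity, and the set identity follows from the reduction in the first paragraph.

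I do not anticipate a serious obstacle here; the only points requiring mild care are (i) making sure Lemma~\ref{theorem: connected_path} is quantified over the correct integer (using $-n$, so that pre- and post-composition with $f^{n}$ cancel), and applying it uniformly over all path lengths $l \le m$ rather than a single fixed length; and (ii) handling the degenerate ``$w = f^{n}(s)$'' case in the definition of $N^{m}_{T}$ separately, which is immediate since $f^{n}$ is injective. Note also that the lemma only requires $m$ to be a positive integer, so no boundary issue arises at $m = 0$.
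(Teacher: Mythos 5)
Your proposal is correct and follows essentially the same route as the paper's proof: both rest on Lemma~\ref{theorem: connected_path}, applied with $f^{n}$ and $f^{-n}$ to transport paths of length $l \le m$, the only difference being that you phrase the argument as a per-vertex biconditional after decomposing $N^{m}_{T}(S)$ as a union, while the paper argues the two set inclusions directly at the level of $S$. Your explicit handling of the degenerate case $w = f^{n}(s)$ is a small point of extra care that the paper's proof leaves implicit.
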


\begin{proof}
For any $x \in N^{m}_{T}(f^{n}(S))$, there exists $z$ in $S$ and $y = f^{n}(z)$ such that $x$ and $y$ are connected by a path of length $l$ in $T$ with $l \le m$. By Lemma~\ref{theorem: connected_path}, we know that there exists a path of length $l$ connecting $f^{-n}(x)$ and $z = f^{-n}(y)$. This implies $f^{-n}(x) \in N^{m}_{T}(S)$. Hence, there exists $s$ in $N^{m}_{T}(S)$ such that $f^{-n}(x) = s$, or equivalently $x = f^n(s)$. This gives $x \in f^n(N^{m}_{T}(S))$ and concludes that $N^{m}_{T}(f^{n}(S)) \subseteq f^n(N^{m}_{T}(S))$. Now we prove the other direction. Consider any $x \in f^n(N^{m}_{T}(S))$. There exists $s$ in $N^{m}_{T}(S)$ such that $x = f^n(s)$. Since $s$ is in $N^{m}_{T}(S)$, we know that there exists $y$ in $S$ and a path of length $l$ connecting $s$ and $y$ with $l\leq m$. By Lemma~\ref{theorem: connected_path}, we know that there exists a path of length $l$ connecting $x = f^n(s)$ and $f^n(y) \in f^n(S)$. This implies $x \in N^{m}_{T}(f^{n}(S))$ and concludes that $f^n(N^{m}_{T}(S)) \subseteq N^{m}_{T}(f^{n}(S))$.
\end{proof}

We are now ready to prove the main theorem.

\vspace{0.2cm}

\noindent \textbf{Proof of Theorem~\ref{theorem: core}} Since $G''$ is isomorphic to $G$, they have the same radius $r$. Let $v$ be a central vertex of $V(G'')$. According to the definition of generating set, there exists an integer $n$ such that $v \in f^{n}(S)$. By the definition of central vertices, all vertices of $G''$ must be reachable from $v$ within $r$ hops. That is, $G''$ is contained by the $r$-th order neighborhood of $v$, namely $V(G'') \subseteq N_{T}^{r}(v)$. Since $v \in f^{n}(S)$, we then get $V(G'') \subseteq N_{T}^{r}(f^{n}(S))$. By Lemma~\ref{theorem: commutability}, we have $N_{T}^{r}(f^{n}(S)) = f^{n}(N_{T}^{r}(S))$. This gives $V(G'') \subseteq f^{n}(N_{T}^{r}(S))$. Now let us choose $G' = \tilde{f}^{-n}(G'')$, we have $V(G') = V(\tilde{f}^{-n}(G'')) = f^{-n}(V(G'')) \subseteq f^{-n}(f^{n}(N_{T}^{r}(S))) = N_{T}^{r}(S)$, where the second equality follows by the definition of the natural extension of an automorphism. Since $\tilde{f}$  maps graphs to isomorphic graphs, we have $G' = \tilde{f}^{-n}(G'') \cong G''$. Note that $G'' \cong G$, we get $G' \cong G$. This concludes the proof.
\hfill $\blacksquare$ \vspace{0.2cm}

The automorphism in Theorem~\ref{theorem: core} can be replaced with a group of automorphisms. This result is formally presented in Corollary~\ref{corollary: core} and applies to various practical scenarios where a graph has multiple independent symmetry transformations. For example, a two-dimensional grid possesses two independent symmetry transformations: translation in both the horizontal and vertical directions. 

\begin{corollary}
Let $T$ be a graph and $S$ be a generating set of $T$ with respect to a group of automorphisms $F$. Let $G$ be a subgraph of $T$ with radius $r$. Then for any subgraph $G''$ of $T$ that is isomorphic to $G$, there exists an automorphisms $f \in F$ and subgraph $G'\subseteq T$ such that $G'' = \tilde f(G')$ with $G' \cong G$, $V(G') \subseteq N^{r}_{T}(S)$ and $\tilde f$ being the natural extension of $f$. \label{corollary: core}
\end{corollary}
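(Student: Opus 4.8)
The plan is to mimic the proof of Theorem~\ref{theorem: core} almost verbatim, with the cyclic group $\{f^{n}: n \in \mathbb{Z}\}$ replaced by the general automorphism group $F$. The first step is to record that the two auxiliary lemmas transfer to this setting: although Lemma~\ref{theorem: connected_path} and Lemma~\ref{theorem: commutability} are phrased in terms of iterated powers $f^{n}$, their proofs use nothing about $f^{n}$ beyond the fact that it, and its inverse, is an automorphism of $T$. Hence for every $g \in F$ one has that $u$ and $v$ are joined by a path of length $l$ in $T$ if and only if $g(u)$ and $g(v)$ are, and that $N^{m}_{T}(g(S)) = g(N^{m}_{T}(S))$ for every $S \subseteq V(T)$ and positive integer $m$. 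I would state this as a one-line remark before starting the main argument, so that the rest of the proof can quote it directly.

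Given a subgraph $G''$ of $T$ isomorphic to $G$, the next step is the radius argument: $G''$ has radius $r$, so choosing a central vertex $v$ of $G''$ gives $V(G'') \subseteq N^{r}_{T}(v)$. Then I would invoke the defining property of a generating set with respect to $F$, namely $\mathrm{orb}_{F}(S) = V(T)$: there exist $f \in F$ and $s \in S$ with $v = f(s)$, so $v \in f(S)$ and therefore $V(G'') \subseteq N^{r}_{T}(f(S))$.

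The third step applies the group version of Lemma~\ref{theorem: commutability} to rewrite $N^{r}_{T}(f(S)) = f(N^{r}_{T}(S))$, giving $V(G'') \subseteq f(N^{r}_{T}(S))$. Setting $G' := \tilde{f}^{-1}(G'')$, one computes $V(G') = f^{-1}(V(G'')) \subseteq f^{-1}(f(N^{r}_{T}(S))) = N^{r}_{T}(S)$, which is the required containment; since $\tilde{f}$, and hence $\tilde{f}^{-1}$, sends a graph to an isomorphic one, we get $G' \cong G'' \cong G$, while $G'' = \tilde{f}(G')$ holds by construction. This is exactly the claimed conclusion, so the proof closes here.

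I do not anticipate a genuine obstacle. The only point that needs care is the first step — confirming that Lemma~\ref{theorem: connected_path} and Lemma~\ref{theorem: commutability}, stated for powers of a single automorphism, in fact hold for an arbitrary element of $F$ because their proofs never use more than the automorphism property. Once that observation is in place, the rest is a line-by-line translation of the proof of Theorem~\ref{theorem: core}, with ``there exists an integer $n$ such that $v \in f^{n}(S)$'' replaced by ``there exists $f \in F$ such that $v \in f(S)$'' and $\tilde{f}^{-n}$ replaced by $\tilde{f}^{-1}$.
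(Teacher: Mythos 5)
Your proof is correct and is exactly the argument the paper intends: the paper states Corollary~\ref{corollary: core} without a written proof, treating it as the line-by-line translation of the proof of Theorem~\ref{theorem: core} that you carry out, and your preliminary observation that Lemmas~\ref{theorem: connected_path} and~\ref{theorem: commutability} use nothing about $f^{n}$ beyond its being an automorphism is the right (and only) point needing care.
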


Theorem~\ref{theorem: core} and Corollary~\ref{corollary: core} indicate that to locate all the subgraphs in $T$ that are isomorphic to $G$, one only needs to search over the subgraph induced by the vertex set $N_{T}^{r}(S)$.
In light of this reduction, we propose the symmetry-based subgraph matching (SBSM) algorithm (Algorithm~\ref{algorithm: subgraph matching}), enabling efficient searching for all circuit mappings given a coupling graph.

% \vspace{0.2cm}
\begin{algorithm}[!ht]
\caption{Symmetry-Based Subgraph Matching (SBSM)}
\label{algorithm: subgraph matching}

\SetKwInOut{Input}{Input}
\SetKwInOut{Output}{Output}
\LinesNumbered

\Input{
A pattern graph $G$, a target graph $T$, a group of automorphisms $F$ of the target graph and the associated generating set $S$.
}

\BlankLine

\Output{
All subgraph isomorphisms of $G$ in $T$.
}

\BlankLine
\BlankLine

Let $r$ be the radius of the pattern graph $G$;

Let $N_{T}^{r}(S)$ be the $r$-th order neighborhood of $S$;

Let $R$ be the induced subgraph of $N_{T}^{r}(S)$ in $T$;

Obtain the set of all isomorphic graphs of $G$ within $R$ and denote this set as $H_{0}$; 

Let $H$ be an empty list;

\For{$G' \in H_{0}$}{
    Let $M = \{\tilde{f}(G'): f \in F, \tilde{f}(G') \subseteq T\}$; 
    
    Append $M$ to $H$;
}

Return $H$;

\end{algorithm}
% \vspace{0.2cm}

The SBSM algorithm consists of three major steps. First, the algorithm identifies a reduced search space for initial subgraph matching. To this end, the algorithm calculates the radius of $G$, denoted as $r$, and computes the $r$-th order neighborhood of $G$, serving as the reduced search space. Second, it employs a standard subgraph matching algorithm, such as VF2~\cite{cordella2004sub}, to search for isomorphic graphs within the reduced search space. Lastly, the searched patterns will be transformed using the symmetry transformations to get all the isomorphisms. 

Note that the coupling graph of a quantum processor may not inherently possess symmetries, particularly along its boundary. Nevertheless, we can effectively apply the SBSM algorithm by embedding the coupling graph within a larger graph that does possess symmetries. It is also worth mentioning that the SBSM algorithm can accommodate various types of symmetries, which may include but are not limited to translational, mirror, and rotational symmetries. For the numerical experiments discussed in the following sections, we will primarily focus on translational symmetries for simplicity. 

\subsection{Circuit mapping scoring through vectorization}

Quantum circuit remapping involves the assessment of all topologically equivalent circuit mappings. The existing approach in Qiskit accomplishes this by using a for-loop to estimate the overall circuit fidelity for each mapping one by one~\cite{Qiskit}. However, the for-loop implementation is time-consuming, particularly when dealing with a large number of mappings. To mitigate the computational cost of scoring these mappings, we ultilize the technique of vectorization, which transforms the evaluation process into vector computations. As we will demonstrate later, the adoption of vectorization substantially speeds up the circuit evaluation process, which would otherwise be dominant in the runtime of our remapping algorithm. 

Our circuit mapping scoring algorithm is presented in Algorithm~\ref{algorithm: scoring circuit mapping}, where we emphasize the vectors in bold symbols. In this algorithm, we denote the circuit mappings and the error map by $\boldsymbol{L}$ and $E$, respectively. The initial phase involves the transformation of the precompiled circuit into a sequence of individual logical gates, represented as $\boldsymbol{G}$. For each logical gate $g$ within $\boldsymbol{G}$, we access the error rate vector of the corresponding physical gates from the hardware calibration data, $\boldsymbol{E}' = E(\boldsymbol{L}(g))$. Note that $\boldsymbol{E}'$ encapsulates the error rates associated with the specific gate $G'$ across all circuit mappings. Subsequently, by harnessing the element-wise vector multiplication supported by the NumPy library~\footnote{https://numpy.org/}, we perform vector computations to evaluate the circuit scoring vector $\boldsymbol{S}$, recording the estimated fidelity values of all circuit mappings. That is, we perform element-wise multiplication, denoted by $\odot$, iteratively on $\boldsymbol{1} - \boldsymbol{E}'$ for all $g$, where $\boldsymbol{1}$ is the vector of ones. The resulting $\boldsymbol{S}$ is used to guide the selection of the optimal mapping that maximizes the overall circuit fidelity.

\vspace{0.2cm}
\begin{algorithm}[!ht]
    \caption{Circuit Mapping Scoring through Vectorization}\label{algorithm: scoring circuit mapping}
    
    \SetKwInOut{Input}{Input}
    \SetKwInOut{Output}{Output}
    \LinesNumbered
    
    \Input{
    An array of circuit mappings $\boldsymbol{L}$, a logical quantum circuit $C$, and an error map $E$.
    }
    
    \BlankLine
    
    \Output{
    The estimated fidelity of the quantum circuit under the circuit mappings.
    }
    
    \BlankLine
    \BlankLine
    
    Convert $C$ to a list of gates $\boldsymbol{G}$;
    
    Initiate $\boldsymbol{S}$ to be a vector of ones $\boldsymbol{1}$;
    
    \For{$g$ in $\boldsymbol{G}$}{
        Compute $\boldsymbol{L}' = \boldsymbol{L}(g)$;
    
        Compute $\boldsymbol{E}' = E(\boldsymbol{L}')$; 
    
        Update $\boldsymbol{S}$ as  $\boldsymbol{S} \odot (\boldsymbol{1} - \boldsymbol{E}')$;
    }
    
    Return $\boldsymbol{S}$;
\end{algorithm}
\vspace{0.2cm}

\subsection{Symmetry-based quantum circuit mapping}

The SBSM algorithm and the circuit mapping scoring algorithm can work together to facilitate efficient quantum circuit remapping. Now we integrate these steps into a complete quantum circuit mapping algorithm, as presented in Algorithm~\ref{algorithm: circuit mapping}. This symmetry-based circuit mapping (SBCM) algorithm comprises several  main steps. First, the algorithm ultilizes existing methods, such as those supported by Qiskit, to compile the quantum circuit, obtaining the interaction graph $G$ and an initial circuit mapping $L_\text{pre}$. Note that this precompilation should be chosen as independent of the order of $T$. Second, the SBSM algorithm is used to search for isomorphic subgraphs of $G$ within the hardware coupling graph $T$. These discovered subgraphs are recorded in the vector $\boldsymbol{L}$. Note that each isomorphic subgraph is associated with a circuit mapping from the precompiled circuit onto the given quantum hardware. Third, the algorithm employs vectorized computation, as stated in Algorithm~\ref{algorithm: scoring circuit mapping}, to score all circuit mappings. Finally, the circuit mappings are sorted based on their scores (the estimated circuit fidelity), and the circuit mapping that maximizes the scores is selected and returned.

\vspace{0.2cm}
\begin{algorithm}[ht]
\caption{Symmetry-Based Circuit Mapping (SBCM)}\label{algorithm: circuit mapping}

\SetKwInOut{Input}{Input}
\SetKwInOut{Output}{Output}
\LinesNumbered

\Input{
A logical quantum circuit $C$, a coupling graph $T$ of the target quantum processor, a group of automorphisms $F$ of the coupling graph and the associated generating set $S$, and an error map $E$ of the quantum processor.
}

\BlankLine

\Output{
A compiled quantum circuit and its circuit mapping.
}

\BlankLine
\BlankLine

Use $C, T, E$ as inputs to precompile the circuit for selected physical qubits and obtain a precompiled circuit $C'$, its interaction graph $G$ and an initial mapping $L_{\text{pre}}$;

Use $T, F, S, G$ as the inputs of Algorithm~\ref{algorithm: subgraph matching} to get all isomorphic subgraphs of $G$ in $T$ and record them in $\boldsymbol{L}$; 

Compute the element-wise composition $\tilde{\boldsymbol{L}} = \boldsymbol{L} \circ L_{\mathrm{pre}}$;

Use $\tilde{\boldsymbol{L}}, C', E$ as the inputs of Algorithm~\ref{algorithm: scoring circuit mapping} to score the circuit mappings and record the mapping-and-score pair in $(\tilde{\boldsymbol{L}}, \boldsymbol{S})$;

Return the compiled circuit $C'$ and the circuit mapping with the highest score;
\end{algorithm}
\vspace{0.2cm}

\begin{figure*}[!ht]
    \centering
    \includegraphics[width=0.8\textwidth]{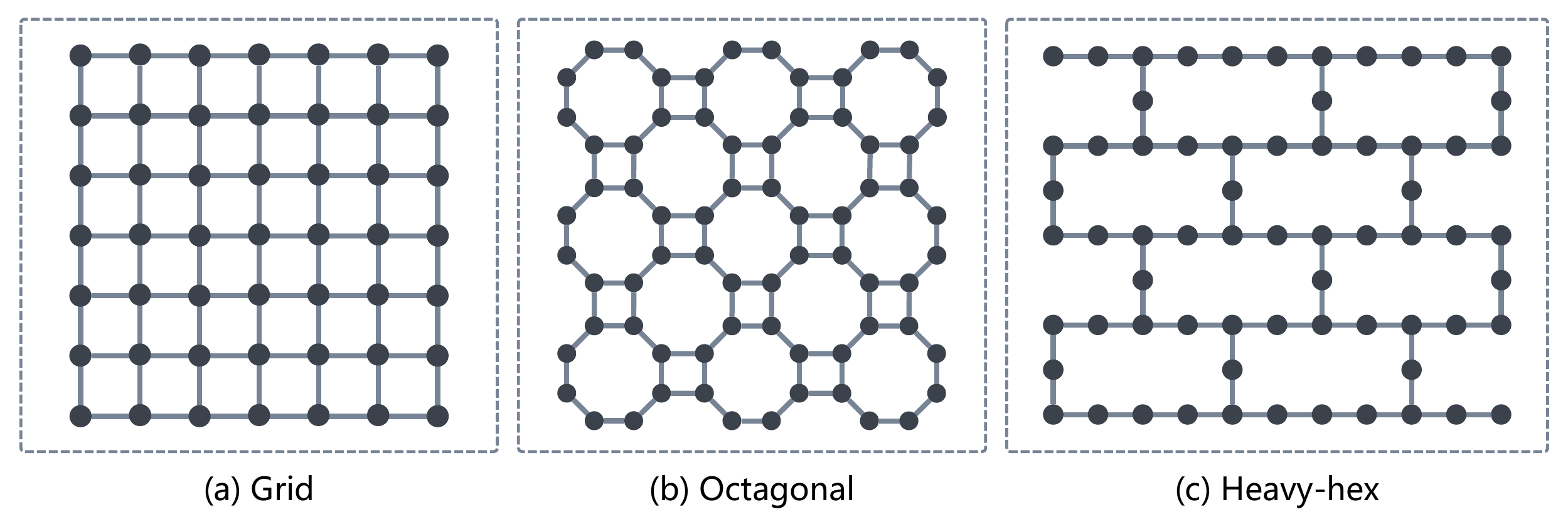}
     \caption{\footnotesize Three typical quantum chip structures. Each node represents a physical qubit, and each edge corresponds to a coupler between two qubits. (a) $7 \times 7$ grid coupling graph with $49$ qubits. (b) $3 \times 3$ octagonal coupling graph with $72$ qubits. (c) $4 \times 2$ heavy-hex coupling graph with $67$ qubits.}
    \label{fig:chip_topology}
\end{figure*}

\subsection{Computational complexity}

We conduct a computational complexity analysis for our symmetry-based algorithms, with a particular focus on scalable quantum processors that feature sparse coupling graphs with bounded degrees. That is, the maximum degree of any vertex in the graph is constrained by a constant value that remains independent of the graph's order. This ensures that the size of the reduced search space remains constant. We are aware that this constraint may not hold for trapped-ion quantum systems whose coupling graph is complete. Nevertheless, such systems inherently exhibit scalability issues, and we believe that the bounded degree assumption is reasonable for all scalable quantum processors, especially in the context of superconducting quantum systems. Our complexity analysis focuses on the scaling of the coupling graph, while we consider the size of the circuit to compile as fixed. 

The following result shows that the SBSM algorithm exhibits a time complexity of $O(n)$, which stands in stark contrast to VF2, a widely used subgraph matching algorithm with a time complexity of $O(n^{2})$ in the best-case scenario and a much less efficient time complexity of $O(n!n)$ in the worst-case scenario~\cite{cordella2004sub}.

\begin{theorem}
Let $T$ be a target graph with order $n$ that has a bounded degree and is characterized by a group of automorphisms $F$ along with a constant-size generating set $S$. Then the SBSM algorithm has a time complexity of $O(n)$, which is optimal for the subgraph matching problem.
\label{thm: SBSM theorem}
\end{theorem}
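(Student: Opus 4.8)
The plan is to bound the running time of Algorithm~\ref{algorithm: subgraph matching} step by step to obtain the $O(n)$ upper bound, and then establish optimality through an output-size argument. Throughout I would work in the standard model in which $T$ is stored as an adjacency list, so that a neighbour query or an edge-membership query costs $O(1)$, and recall that by hypothesis $T$ has bounded degree $\Delta$ and $S$ has constant size. First I would dispose of the steps whose cost is independent of $n$: since the pattern graph $G$ is fixed, its radius $r$ (line~1) is a constant, computable in $O(1)$ time; the reduced search space then satisfies $|N_T^r(S)| \le |S|\sum_{i=0}^{r}\Delta^{i} = O(1)$, so constructing $N_T^r(S)$ and the induced subgraph $R$ (lines~2--3) amounts to a depth-$r$ breadth-first search from each vertex of $S$, each touching only constantly many vertices and edges, hence $O(1)$; and the standard subgraph-matching subroutine on line~4 (e.g.\ VF2) then runs on the constant-size graphs $G$ and $R$ and returns the constant-size family $H_0$ in $O(1)$ time.

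The only part whose cost scales with $n$ is the loop on lines~6--8, and this is where the analysis requires care. For each $G' \in H_0$ --- a constant number of iterations --- the algorithm must enumerate the distinct images $\tilde f(G')$, $f \in F$, that remain contained in $T$. The subtle point is that $F$ itself may be infinite, e.g.\ when $F$ is a translation group acting on an ambient lattice into which the coupling graph is embedded; nonetheless, fixing a central vertex $v$ of $G'$, every such image $\tilde f(G')$ contains $f(v)$ and is contained in $N_T^r(f(v))$, a set of constant size, so for each of the at most $n$ possible values of $f(v)$ there are only constantly many admissible images, giving $O(n)$ images in total. Each image is computed and tested for containment in $T$ in $O(|V(G')|+|E(G')|) = O(1)$ time, so the loop costs $O(n)$; the same bound shows that the output list $H$ has length $O(n)$, so emitting it does not exceed this cost, and the overall running time is $O(n)$.

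Finally, for optimality I would observe that any correct algorithm for the subgraph matching problem must report every subgraph of $T$ isomorphic to $G$, and this output can already have size $\Omega(n)$: taking $G$ to be a single edge and $T$ any bounded-degree graph on $n$ vertices with $\Omega(n)$ edges yields $\Omega(n)$ isomorphic subgraphs. Since reporting each one takes at least constant time, $\Omega(n)$ is an unconditional lower bound on the running time of any such algorithm, so the $O(n)$ bound achieved by SBSM is optimal. The main obstacle, as noted, is not conceptual but lies in pinning down the loop-cost analysis --- in particular the argument that bounds the number of admissible symmetry images by $O(n)$ despite $|F|$ possibly being infinite --- together with fixing the query model that makes each per-candidate containment test $O(1)$.
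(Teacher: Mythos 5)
Your proposal is correct and follows essentially the same route as the paper: the first four lines cost $O(1)$ because the reduced search space $N_T^r(S)$ has constant size under the bounded-degree and constant-$|S|$ assumptions, the symmetry-transformation loop costs $O(n)$ because the number of isomorphic images is $O(n)$ (the paper proves this via Lemma~\ref{lem: linear iso}, anchoring one vertex and allowing at most $d$ choices per subsequent vertex, which is the same counting idea as your central-vertex/constant-neighborhood argument), and optimality follows from an $\Omega(n)$ lower bound. Your treatment is somewhat more careful in two spots — you make the query model explicit and you address head-on how to enumerate the images when $F$ is infinite, and your output-size lower bound is a cleaner justification of $\Omega(n)$ than the paper's "must visit each vertex at least once" — but these are refinements of the same argument rather than a different proof.
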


\begin{proof}
The subgraph matching routine consists of three steps. First, we determine the reduced search space $N_T^r(S)$, which is the $r$-th order neighborhood of a generating set $S$ and $r$ is the radius of the pattern graph. Calculating the graph radius is efficiently performed using a breadth-first search algorithm, with a complexity that is independent of the order of the target graph~\cite{roditty2013fast}. Similarly, as the cardinality of the generating set is constant, identifying its $r$-th order neighborhood requires at most $O(n)$ steps. 
Second, subgraph matching is conducted within this reduced search space, and we consider utilizing the VF2 algorithm, known for its efficiency in subgraph matching and widely used in applications such as graph isomorphism tests within NetworkX~\cite{networkx2023isomorphism}. Since the target graph has a bounded degree, the size of the reduced search space is independent of $n$. In this case, the time complexity of subgraph matching using VF2 in the reduced space remains constant~\cite{cordella2004sub}.
Finally, symmetry transformations are applied to obtain all isomorphic subgraphs. By the subsequent Lemma~\ref{lem: linear iso}, there are at most $O(n)$ subgraphs isomorphic to a connected pattern graph in $T$. So finding all isomorphic subgraphs requires only a number of steps linear in $n$. As a result, the SBSM algorithm has an overall time complexity of $O(n)$. Note that the subgraph matching problem entails finding all isomorphic subgraphs in the target graph, necessitating a minimum of visiting each vertex at least once, resulting in a lower bound of $\Omega(n)$ steps in total. Our SBSM algorithm meets this trivial lower bound, thus concluding the proof of its optimality.  
\end{proof}

\begin{lemma}
Let $T$ be a target graph with order $n$ that has a bounded degree of $d$. Let $G$ be a connected pattern graph with order $m$. Then there are at most $O(n)$  subgraphs in the target graph $T$ that are isomorphic to $G$.
\label{lem: linear iso}
\end{lemma}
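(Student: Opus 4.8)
The plan is to bound the number of isomorphic subgraphs by anchoring each such subgraph to a vertex and then counting how many subgraphs can be anchored at a fixed vertex. First I would observe that since $G$ is connected with order $m$, its radius $r$ is finite and in fact $r \le m-1$; consequently any subgraph $G''$ of $T$ isomorphic to $G$ has a central vertex $v$ from which every vertex of $G''$ is reachable within $r$ hops, so $V(G'') \subseteq N_T^r(v)$. This gives a map from the set of isomorphic subgraphs to $V(T)$ (send $G''$ to one of its central vertices, breaking ties arbitrarily), and the key point is that the fiber over each vertex $v$ — the number of isomorphic copies of $G$ centered at $v$ — is bounded by a constant depending only on $m$ and $d$, not on $n$.

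Second I would estimate that constant. Because $T$ has bounded degree $d$, the neighborhood $N_T^r(v)$ contains at most $1 + d + d^2 + \cdots + d^r \le (d+1)^r$ vertices, a quantity independent of $n$. Any isomorphic copy of $G$ centered at $v$ is determined by a choice of $m$ vertices from this set together with the identification of which vertex plays which role in $G$; crudely, the number of such copies is at most $\binom{|N_T^r(v)|}{m} \cdot m!$, hence at most some function $c(m,d)$ that does not depend on $n$. Summing over the at most $n$ possible central vertices yields at most $c(m,d)\cdot n = O(n)$ isomorphic subgraphs, since $m$ and $d$ are treated as fixed.

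The main obstacle — really the only subtlety — is making the anchoring argument clean: one must be careful that "center" is well-defined enough to give an honest function from isomorphic subgraphs to vertices (a subgraph may have several central vertices, so pick one by a fixed rule, e.g. the least in some ordering), and one must make sure the counting genuinely absorbs all degrees of freedom, i.e. that once the central vertex and the bijection onto $V(G)$ are fixed, the subgraph is determined. Both are routine: the induced-neighborhood bound from bounded degree does all the work, and no symmetry or automorphism structure of $T$ is needed for this lemma — it is a purely combinatorial counting fact. I would therefore present it as a short direct argument: fix the anchor, bound the local neighborhood by $(d+1)^r$, bound the local copies by $(d+1)^{rm} m!$ or similar, and conclude $O(n)$ overall.
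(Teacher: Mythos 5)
Your proof is correct, but it takes a genuinely different route from the paper's. The paper counts subgraph isomorphisms $f\colon G\to T$ directly by extending the embedding one vertex at a time along a connected ordering $v_1,\dots,v_m$ of $V(G)$ in which each $v_i$ ($i\ge 2$) is adjacent to some earlier vertex: there are $n$ choices for $f(v_1)$ and, by the degree bound, at most $d$ choices for each subsequent image, giving at most $n\,d^{m-1}$ embeddings and hence at most that many isomorphic subgraphs. You instead anchor each copy at a chosen central vertex, bound $|N_T^r(v)|$ by a constant using the degree bound, and count the copies inside that neighborhood by $\binom{|N_T^r(v)|}{m}\,m!$. Both arguments rest on the same underlying fact---bounded degree makes local neighborhoods constant-size---but the paper's sequential-extension count is more economical: it needs no radii or central vertices and yields the much smaller constant $d^{m-1}$ per starting vertex versus roughly $\binom{(d+1)^{r}}{m}m!$ in your version. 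Your approach has the virtue of mirroring the anchor-then-search-locally structure of Theorem~\ref{theorem: core}, and of making explicit that the lemma is a purely combinatorial counting fact requiring no symmetry of $T$. The two care points you flag---well-definedness of the anchor map under ties, and the fact that a vertex subset together with a bijection onto $V(G)$ determines the candidate subgraph---are indeed the only subtleties, and you handle them correctly.
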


\begin{proof}
A subgraph isomorphism $f: G \rightarrow T$ embeds the vertex set $V(G)$ into $V(T)$. Let $v_1$ be a vertex in $G$. There are $n$ possible values for $f(v_{1})$, corresponding to all vertices in $T$. Once $f(v_{1})$ is fixed, we choose a vertex $v_2 \in V(G)\backslash \{v_1\}$ that is adjacent to $v_{1}$. The existence of $v_2$ is ensured by the connectivity of $G$. Since $f$ is a subgraph isomorphism from $G$ to $T$, $f(v_{2})$ must be a vertex in $T$ adjacent to $f(v_{1})$. As $T$ has a bounded degree of $d$, each vertex in $T$ has at most $d$ adjacent vertices. Therefore, there are at most $d$ possible values for $f(v_{2})$. Continuing this process, we consider $f(v_{3})$, where $v_{3} \in V(G)\backslash \{v_1,v_2\}$ and is adjacent to at least one node in $\{v_{1}, v_{2}\}$. Without loss of generality, we assume that $v_{2}$ is adjacent to $v_{3}$ for simplicity. Then $f(v_{3})$ must be a vertex in $T$ adjacent to $f(v_{2})$ because $f$ is a subgraph isomorphism. As $T$ has a bounded degree of $d$, there are at most $d$ possible values for $f(v_{3})$. By interating this procedure, we can traverse all the vertices in $G$ and find their images under $f$, concluding that there are at most $nd^{m-1}$ ways to match vertices that ensure $f$ being a subgraph isomorphism. In other words, the total number of subgraph isomorphisms from $G$ to $T$ is at most $nd^{m-1}$ with the dominant factor $O(n)$.
\end{proof}

Theorem~\ref{thm: SBSM theorem} demonstrates that our SBSM algorithm exhibits an optimal time complexity for solving the subgraph matching problem with symmetries, making it an ideal choice for the remapping step. This optimality extends to quantum circuit mapping consisting of the standard mapping and the incorporation of an additional remapping process, as exemplified in Algorithm~\ref{algorithm: circuit mapping}. This is because the initial mapping is independent of the size of the target quantum processor, and the remapping process dominates the overall circuit mapping process in the runtime. This optimality is formally presented in Corollary~\ref{theorem: optimality_SBCM}.

\begin{figure*}[!htbp]
    \centering
    \includegraphics[]{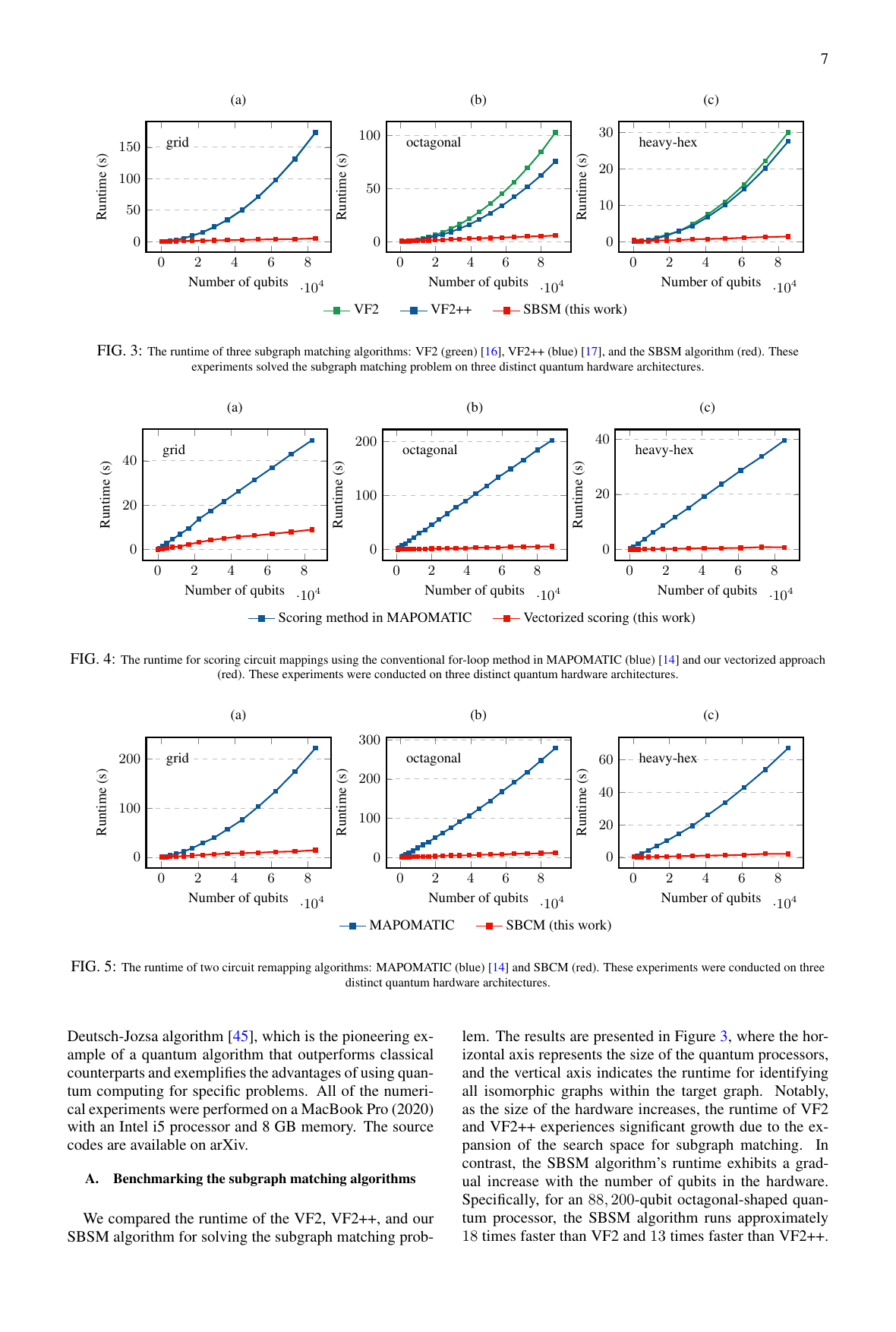}
    \caption{\footnotesize The runtime of three subgraph matching algorithms: VF2 (green)~\cite{cordella2004sub}, VF2++ (blue)~\cite{juttner2018vf2++}, and the SBSM algorithm (red). These experiments solved the subgraph matching problem on three quantum hardware architectures. Note that the results for VF2 and VF2++ coincide in the first plot.}
    \label{fig:subgraph_matching}
\end{figure*}

\begin{figure*}[!htbp]
    \centering
    \includegraphics[]{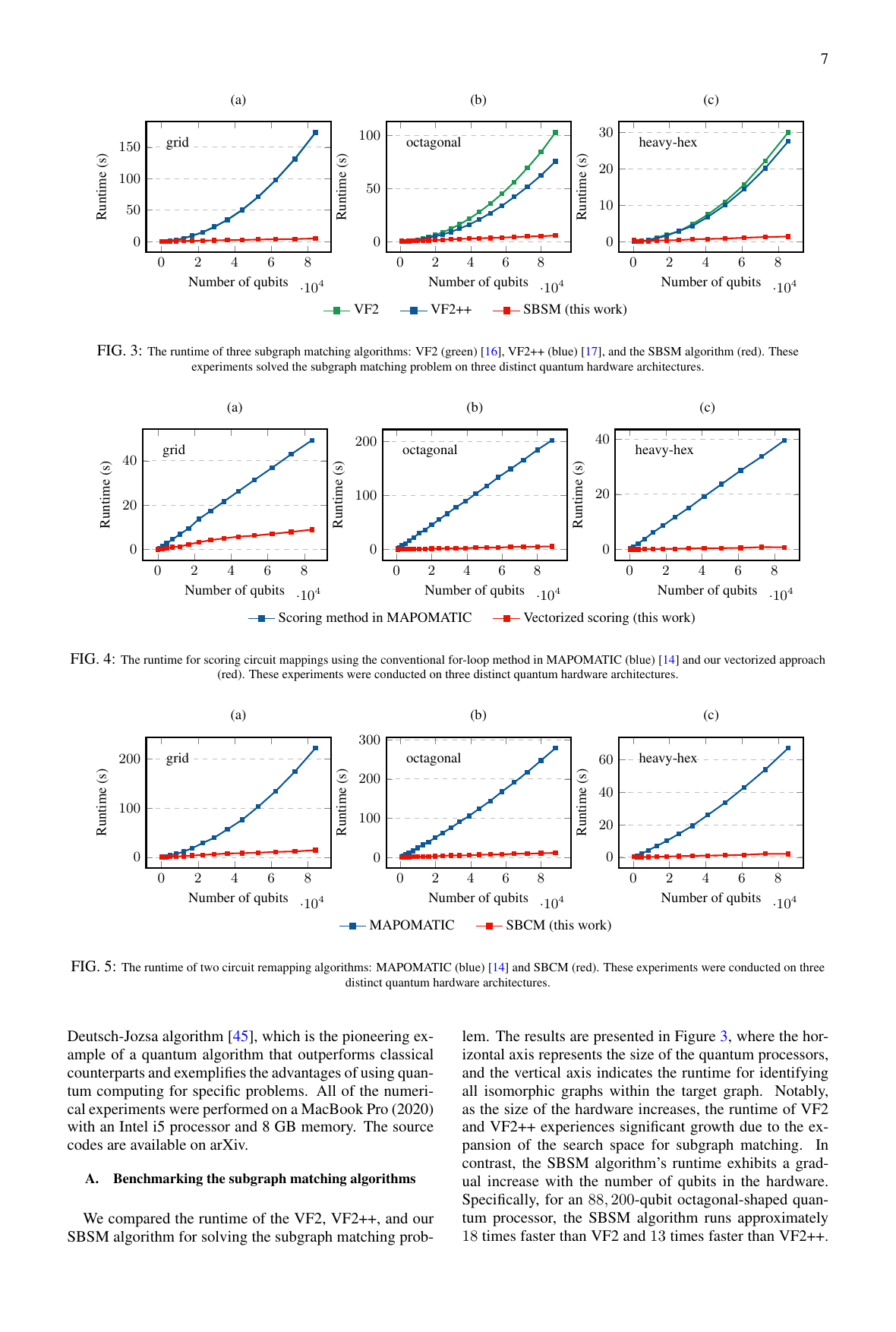}
    \caption{\footnotesize The runtime for scoring circuit mappings using the conventional for-loop method in MAPOMATIC (blue)~\cite{nation2023suppressing} and our vectorized approach (red). These experiments were conducted on three distinct quantum hardware architectures.}
    \label{fig:runtime_scoring}
\end{figure*}

\begin{figure*}[!htbp]
    \centering
    \includegraphics[]{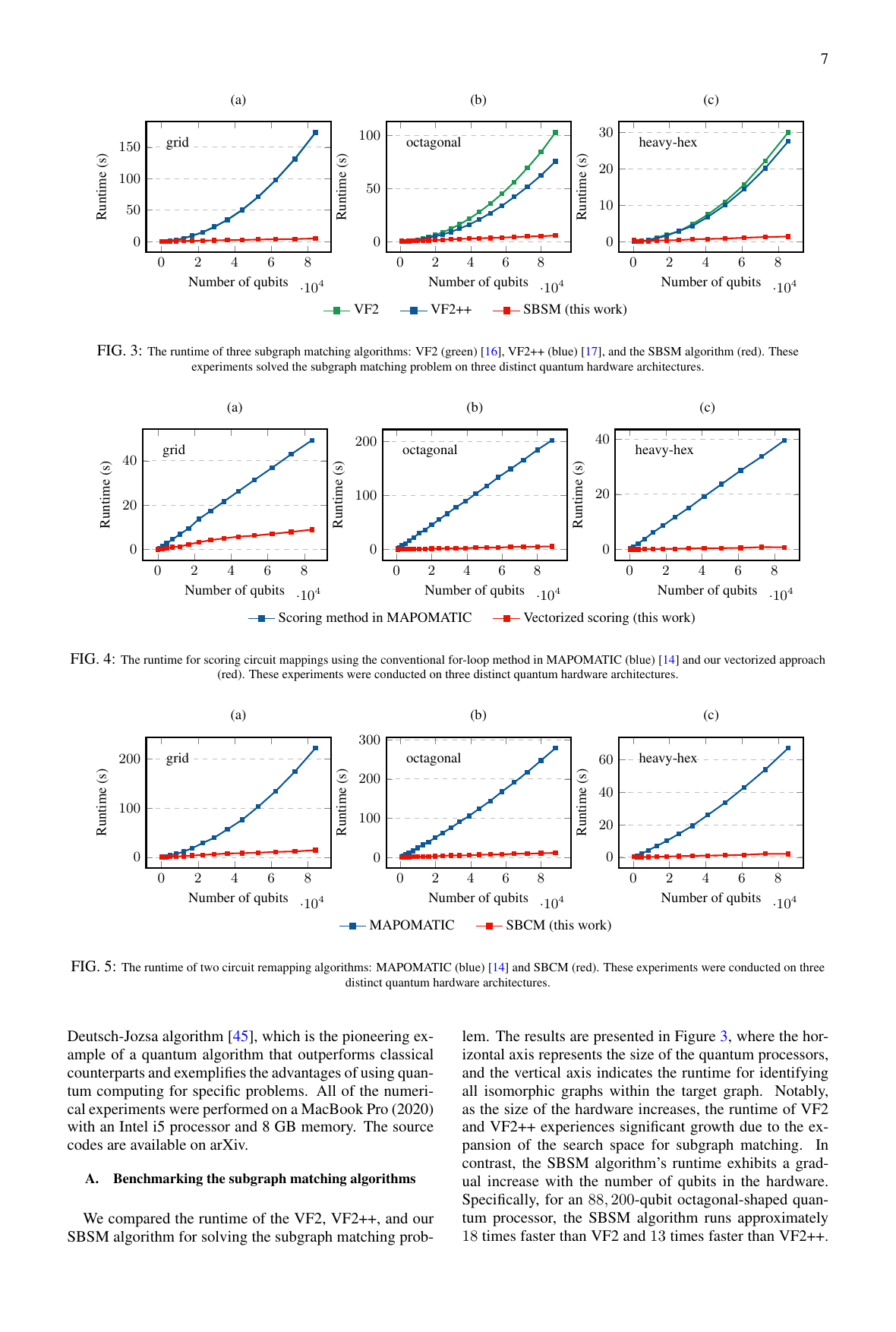}
    \caption{\footnotesize The runtime of two circuit remapping algorithms: MAPOMATIC (blue)~\cite{nation2023suppressing} and SBCM (red). These experiments were conducted on three distinct quantum hardware architectures.}
    \label{fig:circuit_mapping}
\end{figure*}

\begin{corollary}
Consider a quantum processor with a coupling graph of order $n$ that has a bounded degree and is characterized by a group of automorphisms $F$ along with a constant-size generating set $S$. Let $C$ be a quantum circuit to compile. Then any quantum circuit mapping scheme that comprises circuit mapping and remapping steps requires at least $\Omega(n)$ time. Moreover, the SBCM algorithm achieves this optimal time complexity.
\label{theorem: optimality_SBCM}
\end{corollary}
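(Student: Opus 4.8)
The plan is to establish the lower bound $\Omega(n)$ and the matching upper bound for SBCM separately, the latter by walking through the five steps of Algorithm~\ref{algorithm: circuit mapping}.

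For the lower bound, note that any scheme implementing circuit mapping followed by remapping receives the coupling graph $T$ of order $n$ as input, and by definition its remapping step must identify \emph{all} subgraphs of $T$ isomorphic to the interaction graph $G$. Reporting this collection correctly forces the algorithm to examine every vertex of $T$ at least once, since a vertex that is never inspected could be the central vertex of an as-yet-undiscovered isomorphic subgraph; hence $\Omega(n)$ steps are unavoidable. Moreover, on symmetric architectures (e.g.\ a grid with translational symmetry and a bounded-size pattern) the number of such subgraphs is itself $\Theta(n)$ by the counting behind Lemma~\ref{lem: linear iso}, so even emitting the output costs $\Omega(n)$. Either way, any such scheme requires $\Omega(n)$ time.

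For the upper bound I would bound each step of Algorithm~\ref{algorithm: circuit mapping} in turn. Step~1 (precompilation for a fixed choice of physical qubits) is, by the stated requirement that the precompilation be chosen independent of the order of $T$, an $O(1)$ operation in $n$. Step~2 invokes SBSM, which runs in $O(n)$ time by Theorem~\ref{thm: SBSM theorem}; in particular the output list $\boldsymbol{L}$ holds at most $O(n)$ isomorphic subgraphs by Lemma~\ref{lem: linear iso}, each of constant description size because the circuit is fixed. Step~3 composes each of these $O(n)$ constant-size mappings with the constant-size initial mapping $L_{\mathrm{pre}}$, costing $O(n)$ overall. Step~4 runs Algorithm~\ref{algorithm: scoring circuit mapping}: it iterates over the fixed list of gates $\boldsymbol{G}$ and, for each gate, performs element-wise operations on vectors of length $|\boldsymbol{L}| = O(n)$, giving $O(n)$ per gate and $O(n)$ in total. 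Step~5 returns the mapping of maximum score, which is a single linear scan over the $O(n)$ entries of $\boldsymbol{S}$, hence $O(n)$. Summing, SBCM runs in $O(n)$ time, which matches the lower bound and proves optimality.

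The main obstacle is making precise that all intermediate objects — the list $\boldsymbol{L}$ of discovered mappings, the composed list $\tilde{\boldsymbol{L}}$, and the scoring vector $\boldsymbol{S}$ — have size exactly $O(n)$ rather than merely being finite; this rests entirely on Lemma~\ref{lem: linear iso} together with the standing assumption that the circuit (and thus the pattern graph, its radius, and its gate count) is held fixed while only $T$ grows, so no new estimate is needed. A minor point to flag is the word ``sorted'' in Step~5: a full sort would cost $O(n\log n)$, so one should observe that selecting the optimal mapping only requires extracting the maximum of $\boldsymbol{S}$, which is linear. With these clarifications the argument reduces to assembling Theorem~\ref{thm: SBSM theorem} and Lemma~\ref{lem: linear iso}.
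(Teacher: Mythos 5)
Your proposal is correct and follows essentially the same route as the paper: the lower bound comes from the observation that remapping must visit every vertex of $T$ (as already noted in the proof of Theorem~\ref{thm: SBSM theorem}), and the upper bound comes from bounding each step of Algorithm~\ref{algorithm: circuit mapping} using Theorem~\ref{thm: SBSM theorem} and Lemma~\ref{lem: linear iso}. Your remark that one should extract the maximum of $\boldsymbol{S}$ rather than fully sort it is a sensible clarification of a point the paper leaves implicit, but it does not change the argument.
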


\begin{proof}
Since the remapping process requires at least $\Omega(n)$ steps, the overall mapping has a time complexity no less than $\Omega(n)$. In terms of our SBCM algorithm, the precompilation is independent of the order of the coupling graph and it only requires $O(n)$ steps to find all isomorphic graphs. By Lemma~\ref{lem: linear iso}, there are at most $O(n)$ subgraphs isomorphic to a connected pattern graph in $T$. This implies that the scoring process and the search for the mapping with the highest score each consume a maximum of $O(n)$ steps. Hence, the SBCM algorithm has the time complexity of $O(n)$ and it achieves the lower bound.
\end{proof}

\vspace{-0.6cm}
\section{Benchmarking Results}
\label{section: performance benchmark}

In this section, we conduct numerical experiments to assess the performance of our algorithms and compare them to the existing ones such as VF2, VF2++ and MAPOMATIC. In particular, we consider quantum processors with grid, octagonal and heavy-hex architectures, as depicited in Figure~\ref{fig:chip_topology}, which are the state-of-the-art architectures ultilized by Google~\cite{arute2019quantum}, Rigetti~\cite{architecture2022rigetti} and IBM~\cite{architecture2021ibm}, respectively. We specifically consider layouts with equal rows and columns. The test circuit we use is the $5$-qubit Deutsch-Jozsa algorithm~\cite{deutsch1992rapid}, which is the pioneering example of a quantum algorithm that
outperforms classical counterparts and exemplifies the advantages of using quantum computing
for specific problems. All of the numerical experiments were performed on a MacBook Pro (2020) with an Intel i5 processor and 8 GB memory. The source codes are available on arXiv.

\vspace{-0.4cm}
\subsection{Benchmarking the subgraph matching algorithms}

We compared the runtime of the VF2, VF2++, and our SBSM algorithm for solving the subgraph matching problem. The results are presented in Figure~\ref{fig:subgraph_matching}, where the horizontal axis represents the size of the quantum processors, and the vertical axis indicates the runtime for identifying all isomorphic graphs within the target graph. Notably, as the size of the hardware increases, the runtime of VF2 and VF2++ experiences significant growth due to the expansion of the search space for subgraph matching. In contrast, the SBSM algorithm's runtime exhibits a gradual increase with the number of qubits in the hardware.
Specifically, for an $88,200$-qubit octagonal-shaped quantum processor, the SBSM algorithm runs approximately $18$ times faster than VF2 and $13$ times faster than VF2++. These benchmarking results underscore a remarkable performance advantage of the SBSM algorithm over VF2 and VF2++, making it a more suitable choice for the search of circuit mappings to large-scale quantum computers.

\subsection{Benchmarking the circuit mapping scoring methods}

We conducted benchmark tests comparing the runtime of both the conventional for-loop method employed by MAPOMATIC and our vectorized method for evaluating all mappings of a $5$-qubit Deutsch-Jozsa algorithm on the three quantum hardware architectures. 
To account for the impact of imperfect fabrication and noisy environments, we assumed that the fidelity of quantum gates follows a two-dimensional spatial normal distribution, with the highest gate fidelity assumed to be at the center of the layout. The results of the numerical experiments are presented in Figure~\ref{fig:runtime_scoring}, where the blue line represents the runtime of the scoring method used by MAPOMATIC, and the red line shows the runtime of our vectorized scoring. Evidently, for all three hardware architectures, the vectorized method's runtime is substantially shorter than that of the conventional for-loop scoring method. Specifically, in the case of an octagonal-shaped quantum processor with $88,200$ qubits, the vectorized method accomplishes this task approximately $36$ times faster than the conventional scoring function, resulting in an impressive $97\%$ reduction in runtime. This comparison illustrates the efficiency gains achieved through vectorization and the substantial time-saving potential in the circuit mapping process.

\subsection{Benchmarking the complete circuit mapping algorithms}

We conducted a benchmark for the complete quantum circuit mapping problem, comparing the total runtime of our SBCM algorithm with MAPOMATIC, as they map the $5$-qubit Deutsch-Jozsa algorithm onto the three quantum hardware architectures. The results are depicted in Figure~\ref{fig:circuit_mapping}, where the red line represents the runtime of the SBCM algorithm and the blue line represents that of MAPOMATIC. Notably, our SBCM algorithm outperforms MAPOMATIC for all three hardware architectures, with the runtime of the former being substantially smaller. For instance, when mapping the input circuit onto an $88,200$-qubit octagonal-shaped quantum processor, MAPOMATIC took approximately $279$ seconds to find the optimal circuit mapping, while our SBCM algorithm completed the same task in just $11$ seconds. This represents an impressive $96\%$ reduction in runtime. The comparison results clearly demonstrate the superiority of the SBCM algorithm over MAPOMATIC for mapping circuits onto quantum hardware with typical architectures, underscoring the significance of utilizing hardware symmetries in finding the optimal circuit mapping.

\section{Conclusions}
\label{section: concluding remarks}

In this work, we have introduced an efficient algorithm for identifying all circuit mappings within scalable quantum systems by leveraging its inherent symmetries. This algorithm is based on a subgraph matching approach that harnesses hardware symmetries to significantly reduce the size of the search space. We have provided theoretical proof of the optimality of the symmetry-based algorithms and conducted numerical experiments to confirm their advantages in practical scenarios. It is worth noting that the symmetry-based circuit mapping can be integrated with other existing compilation techniques~\cite{fang2023dynamic} and helps to bridge the gap between theoretical quantum algorithms and their physical implementation on quantum computers with operational constraints and limited resources.

As scalable quantum processors necessarily possess symmetries, we envision the potential to leverage hardware symmetries across various aspects of quantum circuit compilation and beyond. For instance, the intrinsic symmetries of quantum devices could be exploited to enhance the efficiency of quantum gate optimizations, dynamic circuit compilation, and distributed quantum computing across multiple quantum processors. We believe that ample opportunities for future research exist to delve deeper into these promising possibilities.

\vspace{-0.4cm}
\section*{Acknowledgements}
D. Y. acknowledges valuable discussions with Zhiping Liu. This work was done when D. Y. was a research intern at Baidu Research.

\bibliography{reference}

\end{document}